\newtheorem{thm}{Theorem}
\newtheorem{lem}[thm]{Lemma}
\newtheorem{proof}[thm]{proof}
\newtheorem{defn}[thm]{Definition}
\newtheorem{exam}[thm]{Example}
\begin{document}

\title{Binary Locally Repairable Codes ---Sequential Repair for
Multiple Erasures}
\author{\IEEEauthorblockN{Wentu Song}
\IEEEauthorblockA{Singapore University of Technology\\ and Design,
Singapore\\
Email: wentu\_song@sutd.edu.sg} \and \IEEEauthorblockN{Chau Yuen}
\IEEEauthorblockA{Singapore University of Technology\\ and
Design, Singapore\\
Email: yuenchau@sutd.edu.sg}} \maketitle

\begin{abstract}
Locally repairable codes (LRC) for distribute storage allow two
approaches to locally repair multiple failed nodes: 1) parallel
approach, by which each newcomer access a set of $r$ live nodes
$(r$ is the repair locality$)$ to download data and recover the
lost packet; and 2) sequential approach, by which the newcomers
are properly ordered and each newcomer access a set of $r$ other
nodes, which can be either a live node or a newcomer ordered
before it. An $[n,k]$ linear code with locality $r$ and allows
local repair for up to $t$ failed nodes by sequential approach is
called an $(n,k,r,t)$-exact locally repairable code (ELRC).

In this paper, we present a family of binary codes which is
equivalent to the direct product of $m$ copies of the $[r+1,r]$
single-parity-check code. We prove that such codes are
$(n,k,r,t)$-ELRC with $n=(r+1)^m,k=r^m$ and $t=2^m-1$, which
implies that they permit local repair for up to $2^m-1$ erasures
by sequential approach. Our result shows that the sequential
approach has much bigger advantage than parallel approach.
\end{abstract}


\IEEEpeerreviewmaketitle

\section{Introduction}
In a distributed storage system (DSS), data is stored through a
large, distributed network of storage nodes. To maintain the data
reliability in the presence of node failures, the system should
have the ability of \emph{node repair}. That is, when some of the
storage nodes fail, each failed node is replaced by a
\emph{newcomer} where the lost packet is recovered and stored
again.

Various coding techniques are employed by modern DSS to improve
system performance, among which locally repairable codes (LRC) aim
to minimize the repair locality, i.e. the number of disk accesses
required for single node repair
\cite{Papail122}$-$\cite{Papail121}.

The $i$th coordinate of an $[n, k]$ linear code $\mathcal C~($also
called the $i$th code symbol of $\mathcal C)$ is said to have
\emph{locality} $r$, if its value is computable from the values of
a set of at most $r$ other coordinates of $\mathcal C~($called a
repair set of $i)$. Codes with all code symbols having locality
$r~(r<k)$ are called locally repairable codes. In a DSS with an
LRC $\mathcal C$ as the storage code, the data packet stored in
each storage node is a code symbol of $\mathcal C$ and any single
failed node can be ``locally and exactly repaired" in the sense
that the newcomer can recover the lost data by accessing at most
$r$ other nodes, where $r$ is the locality of $\mathcal C$.

To handle the problem of local repair for multiple failed nodes,
some special subclasses of LRCs are investigated, such as: a)
Codes with all-symbol locality $(r,t+1)$, also called $(r,t+1)_a$
codes, in which each code symbol is contained in a local code of
length at most $r+t$ and minimum distance at least $t+1$
\cite{Prakash12}; b) Codes with all-symbol locality $r$ and
availability $t$, in which each code symbol has $t$ pairwise
disjoint repair sets with locality $r$ \cite{Wang14,Rawat14}; c)
Codes with $(r,t)$-locality, in which each subset of $t$ code
symbols can be cooperatively repaired from at most $r$ other code
symbols \cite{Rawat-14} $($For convenience, in the following, we
will call such codes as $(r,t)$-CLRC.$)$; d) Codes with overall
local repair tolerance $t$, in which for any $E\subseteq[n]$ of
size $t$ and any $i\in E$, the $i$th code symbol has a repair set
contained in $[n]\backslash E$ and with locality $r$
\cite{Pamies13}. Clearly, these four subclasses of LRC permit
local repair for up to $t$ failed nodes by \emph{parallel
approach} --- each newcomer can access $r$ live nodes to recover
the corresponding lost packet. We also call $t$ as the erasure
tolerance of such codes.

For $(r,\delta)_a$ codes and $(r,t)$-CLRC, the code rate satisfies
(e.g., see \cite{Wentu14} and \cite{Rawat-14}):
\begin{align}\label{rate-bd-1}\frac{k}{n}\leq
\frac{r}{r+t}.\end{align} For codes with locality $r$ and
availability $t$, it was proved in \cite{Tamo14} that the code
rate satisfies:
\begin{align}\label{rate-bd-2}\frac{k}{n}\leq
\frac{1}{\prod_{j=1}^{t}(1+\frac{1}{jr})}.\end{align} However, for
$t\geq 2$, it is not known whether the code rate bound
\eqref{rate-bd-2} is achievable. Recent work by Wang et al.
\cite{Wang15} shows that for any positive integers $r$ and $t$,
there exist codes with locality $r$ and availability $t$ over the
binary field with code rate $\frac{r}{r+t}$. Unfortunately, such
codes do not achieve the bound \eqref{rate-bd-2} for $t\geq 2$.
The problem of constructing codes with locality $r$ and
availability $t\geq 2$ that achieve the optimal code rate is still
an open problem.

A more general way to locally repair $t~(t\geq 2)$ failed nodes is
the \emph{sequential approach}, by which the $t$ newcomers can be
properly ordered in a sequence and, to recover the lost packet,
each newcomer can access $r$ other nodes, each of which can be a
live node or a newcomer ordered before it \cite{Prakash-14,
Wentu-15}. In \cite{Wentu-15}, an $[n,k]$ linear code that has
locality $r$ and permit local repair for up to $t$ failed nodes by
sequential approach is called an $(n,k,r,t)$-\emph{exact locally
repairable code} (ELRC). Clearly, the four subclasses of LRC,
i.e., $(r,\delta)_a$ codes, $(r,t)$-CLRC, codes with locality $r$
and availability $t$, and codes with overall local repair
tolerance $t$, are all $(n,k,r,t)$-ELRC. Potentially, the
sequential approach allows us to design codes with improved
parameter properties than the parallel approach.

\renewcommand\figurename{Fig}
\begin{figure}[htbp]
\begin{center}
\includegraphics[height=2.3cm]{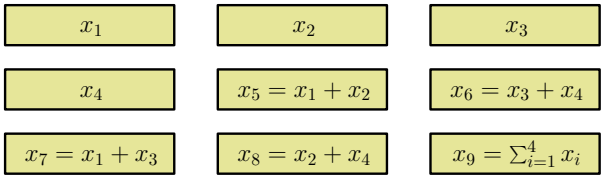}
\end{center}
\vspace{-0.2cm}\caption{A $(9,4,2,3)$-ELRC. }\label{exam-code-1}
\end{figure}

\begin{exam}\label{exam-ELRC-1}
As an example of sequential approach, consider the code
illustrated in Fig. \ref{exam-code-1}, where $x_1,\cdots,x_4$ are
information symbols and $x_5,\cdots,x_9$ are parity symbols. We
can check that $x_1=x_2+x_5=x_3+x_7$. So $\{x_2,x_5\}$ and
$\{x_3,x_7\}$ are two disjoint repair sets of $x_1$. Similarly, we
can find two disjoint repair sets for each of $x_2,\cdots, x_9$.
The repair set of each code symbol is illustrated in Fig.
\ref{exam-code-1-1}. So this code has locality $2$ and
availability $2$. Hence, it permits local repair for up to $2$
erasures by the parallel approach.
\end{exam}

\renewcommand\figurename{Fig}
\begin{figure}[htbp]
\begin{center}
\includegraphics[height=3.4cm]{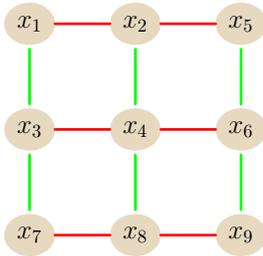}
\end{center}
\vspace{-0.2cm}\caption{Repair relation of code symbols of the
code in Fig. \ref{exam-code-1}: Each line (red line or green line)
contains $3$ symbols and any symbol on a line can be computed from
the other symbols on the same line. Note that each symbol belongs
to two lines --- a red line and a green line, hence has two repair
sets.}\label{exam-code-1-1}
\end{figure}

However, we can check that this code is a $(9,4,2,3)$-ELRC --- it
permits local repair for up to $3$ failed nodes by sequential
approach. For example, if $x_1,x_5,x_7$ are lost, then we can
repair them by the following sequence of equations: $x_5=x_6+x_9$,
$x_7=x_8+x_9$ and $x_1=x_2+x_5$. During the repair process, $x_5$
is repaired before $x_1$. Once $x_5$ is repaired, it can be used
to repair $x_1$. Hence, the repair process is feasible. Note that
$x_1,x_5,x_7$ can't be repaired by parallel approach because both
the two repair sets of $x_1$ contain a lost symbol.

Most existing works about LRC focus on parallel repair approach
\cite{Pamies13}$-$\cite{Wentu14}. In the field of $(n,k,r,t)$-ELRC
$($i.e., LRC with sequential repair approach$)$, only for
$t\in\{2,3\}$ is investigated \cite{Prakash-14, Wentu-15}.

For $(n,k,r,t=2)$-ELRC, the code rate satisfies \cite{Prakash-14}:
\begin{align}\label{rate-bd-3}\frac{k}{n}\leq
\frac{r}{r+2}.\end{align} An upper bound for the minimum distance
of such codes and a construction of codes achieving the minimum
distance bound were also given in \cite{Prakash-14}.

For $(n,k,r,t=3)$-ELRC, it was proved in \cite{Wentu-15} that the
code length $n$ satisfies:
\begin{align}\label{rate-bd-4}n\geq
k+\left\lceil\frac{2k+\lceil\frac{k}{r}\rceil}{r}\right\rceil\end{align}
and there exist codes with code length meet this bound. However,
for $t\geq 3$, no result is known about the minimum distance
bound, and for $t\geq 4$, no result is known about the code rate
bound. Construction of $(n,k,r,t\geq 4)$-ELRC is not seen in
literature either.

In this paper, we prove that for any given positive integers
$r~(r\geq 2)$ and $m$, the direct product of $m$ copies of the
$[r+1,r]$ single-parity-check code is an $(n,k,r,t=2^m-1)$-ELRC.
So such code permits local repair for up to $t=2^m-1$ erasures by
sequential approach. The code rate of such codes is shown to be
much larger than $(r,\delta)_a$ codes and $(r,t)$-CLRC. Moreover,
it was pointed out in \cite{Tamo14} that such code has locality
$r$ and availability $m$, which implies that it permits local
repair for up to only $m$ failed nodes by parallel approach.
Hence, our result shows that sequential approach has much bigger
advantage than parallel approach for such codes.

The rest of this paper is organized as follows. In Section
\uppercase\expandafter{\romannumeral 2}, we state the formal
definition of $(n,k,r,t)$-ELRC. In section
\uppercase\expandafter{\romannumeral 3}, we give a method to
construct codes that are equivalent to the direct product codes
and present our main theorem. We prove the main theorem in Section
\uppercase\expandafter{\romannumeral 4} and conclude the paper in
Section \uppercase\expandafter{\romannumeral 5}.

\section{Preliminary}
For any set $A$, we use $|A|$ to denote the size $($i.e., the
number of elements$)$ of $A$. A set $B$ is called an $r$-subset of
$A$ if $B\subseteq A$ and $|B|=r$. For any positive integer $n$,
we denote $$[n]:=\{1,2,\cdots,n\}.$$

An $[n,k]$ linear code over the finite field $\mathbb F$ is a
$k$-dimensional subspace of the vector space $\mathbb F^n$, where
$n,k$ are positive integers and $k\leq n$.

In this section, we present the formal definition of
$(n,k,r,t)$-exact locally repairable code (ELRC). More details can
be found in \cite{Wentu-15}.

Let $\mathcal C$ be an $[n,k]$ linear code over the field $\mathbb
F$. If there is no confusion in the context, we will omit the base
field $\mathbb F$ and only say that $\mathcal C$ is an $[n,k]$
linear code. A $k$-subset $S$ of $[n]$ is called an
\emph{information set} of $\mathcal C$ if for all codeword
$x=(x_1,x_2,\cdots,x_n)\in\mathcal C$ and all $i\in[n]$,
$x_i=\sum_{j\in S}a_{i,j} x_j$, where all $a_{i,j}\in\mathbb F$
and are independent of $x$. The code symbols in $\{x_j, j\in S\}$
are called \emph{information symbol} of $\mathcal C$. In contrast,
code symbols in $\{x_i, i\in [n]\backslash S\}$ are called
\emph{parity symbol} of $\mathcal C$. An $[n,k]$ linear code has
at least one information set.

\begin{defn}\label{def-lrs}
Let $i\in[n]$ and $R\subseteq[n]\backslash\{i\}$. The subset $R$
is called an $(r,\mathcal C)$-\emph{repair set} of $i$ if $|R|\leq
r$ and $x_i=\sum_{j\in R}a_jx_j$ for all
$x=(x_1,x_2,\cdots,x_n)\in\mathcal C$, where all $a_j\in\mathbb F$
and are independent of $x$.
\end{defn}

\begin{defn}\label{r-compute} Let $E$ be a $t$-subset of $[n]$ and
$\overline{E}=[n]\backslash E$. The code $\mathcal C$ is said to
be $(E,r)$-repairable if there exists an index of $E$, say
$E=\{i_1,\cdots,i_t\}$, and a collection of subsets
$$\{R_{\ell}\subseteq\overline{E}\cup\{i_1,\cdots,i_{\ell-1}\};
|R_{\ell}|\leq r, \ell\in[t]\}$$ such that for each $\ell\in[t]$,
$R_{\ell}$ is an $(r,\mathcal C)$-repair set of $i_\ell$.
\end{defn}

\begin{defn}\label{e-lrc}
An $(n,k,r,t)$-\emph{exact locally repairable code (ELRC)} is an
$[n,k]$ linear code $\mathcal C$ such that for each
$E\subseteq[n]$ of size $|E|\leq t$, $\mathcal C$ is
$(E,r)$-repairable.
\end{defn}

By Definition \ref{r-compute} and \ref {e-lrc}, if a DSS uses an
$(n,k,r,t)$-ELRC as the storage code, then any $t'~(t'\leq t)$
failed nodes can be locally repaired by sequential approach.

The following lemma gives a seemingly simpler characterization for
$(n,k,r,t)$-ELRC.

\begin{lem}[\cite{Wentu-15}, Lemma 6]\label{lem-ELRC}
An $[n,k]$ linear code $\mathcal C$ is an $(n,k,r,t)$-ELRC if and
only if for any $E\subseteq[n]$ of size $0<|E|\leq t$, there
exists an $i\in E$ such that $i$ has an $(r,\mathcal C)$-repair
set contained in $[n]\backslash E$.
\end{lem}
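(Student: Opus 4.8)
The plan is to prove the two implications of the equivalence separately. The forward direction will be essentially immediate from unwinding the definitions, so the real work lies in the reverse direction, which I would establish by an inductive ``peeling'' argument on $|E|$.

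For the forward direction ($\Rightarrow$), I would assume $\mathcal C$ is an $(n,k,r,t)$-ELRC and fix any $E\subseteq[n]$ with $0<|E|\leq t$. By Definition \ref{e-lrc}, $\mathcal C$ is $(E,r)$-repairable, so Definition \ref{r-compute} supplies an ordering $E=\{i_1,\ldots,i_s\}$ with $s=|E|$ together with repair sets $R_\ell\subseteq\overline{E}\cup\{i_1,\ldots,i_{\ell-1}\}$. Inspecting the very first step, for $\ell=1$ the prefix $\{i_1,\ldots,i_{\ell-1}\}$ is empty, so $R_1\subseteq\overline{E}=[n]\backslash E$. Hence $i:=i_1$ is an element of $E$ possessing an $(r,\mathcal C)$-repair set contained in $[n]\backslash E$, which is exactly the asserted condition.

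For the reverse direction ($\Leftarrow$), I would assume the stated condition holds for every nonempty $E$ of size at most $t$ and then prove, by induction on $|E|$, that $\mathcal C$ is $(E,r)$-repairable for every $E$ with $|E|\leq t$; together with Definition \ref{e-lrc} this yields the ELRC property. The base case $|E|=0$ is vacuous. For the inductive step with $|E|=s\geq 1$, the hypothesis applied to $E$ furnishes an element $i_1\in E$ with a repair set $R_1\subseteq\overline{E}$, and I would designate $i_1$ as the first symbol repaired. Putting $E':=E\setminus\{i_1\}$, which has size $s-1<s\leq t$, the induction hypothesis gives an ordering $E'=\{i_2,\ldots,i_s\}$ and repair sets $R_\ell\subseteq\overline{E'}\cup\{i_2,\ldots,i_{\ell-1}\}$ for $\ell=2,\ldots,s$.

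The crux of the argument, and the step I expect to demand the most care, is checking that the repair sets inherited from $E'$ satisfy the containment required for the full set $E$. Here I would use the identity $\overline{E'}=\overline{E}\cup\{i_1\}$, which holds because $i_1\in E$; substituting it gives $R_\ell\subseteq\overline{E}\cup\{i_1,i_2,\ldots,i_{\ell-1}\}$ for every $\ell\geq 2$, and combined with $R_1\subseteq\overline{E}$ this shows the merged ordering $i_1,i_2,\ldots,i_s$ with repair sets $R_1,\ldots,R_s$ fulfills every clause of Definition \ref{r-compute}, so $\mathcal C$ is $(E,r)$-repairable. Since every subset arising in the recursion has size at most $t$, the hypothesis is available at each stage and the induction closes. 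The only genuine obstacle is the bookkeeping of nested complements: one must place the ``good'' element \emph{first} (not last) so that the growing prefix $\{i_1,\ldots,i_{\ell-1}\}$ correctly absorbs the newly peeled index and aligns with the prefix-type containment in Definition \ref{r-compute}.
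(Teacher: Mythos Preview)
Your proof is correct. The forward direction is exactly the observation that the first index $i_1$ in the repair ordering must have its repair set $R_1$ contained in $\overline{E}$, and the reverse direction is a clean induction in which one peels off a repairable symbol, applies the hypothesis to the smaller set, and splices the two orderings together via the identity $\overline{E'}=\overline{E}\cup\{i_1\}$; the bookkeeping you flag (placing the good element first) is handled correctly.

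As for comparison with the paper: this paper does not actually prove Lemma~\ref{lem-ELRC}. It is quoted verbatim from \cite{Wentu-15} (Lemma~6 there) and used as a black box in the proof of Theorem~\ref{main-th}. So there is no in-paper argument to compare against; your argument is the natural one and would serve as a self-contained proof of the cited lemma.
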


In the following, if $R$ is an $(r,\mathcal C)$-repair set of $i$,
we will omit the prefix $(r,\mathcal C)$ and only say that $R$ is
a repair set of $i$.

\section{Code Construction}
Let $r,m$ be two positive integers such that $r\geq 2$. Let
$n=(r+1)^m$ and $k=r^m$. We will construct a binary $[n,k]$ linear
code that is equivalent to the direct product of $m$ copies of the
$[r+1,r]$ single-parity-check code. Moreover, we will show that
such code is an $(n,k,r,t)$-ELRC, where $t=2^m-1$.

In the following, we will denote $$\mathbb
Z_r=\{0,1,\cdots,r-1\}$$ and $$\mathbb
Z_r^m=\{(\lambda_1,\cdots,\lambda_m);
\lambda_1,\cdots,\lambda_m\in\mathbb Z_r\}.$$ That is, $\mathbb
Z_r^m$ is the Cartesian product of $m$ copies of $\mathbb Z_r$.
Similarly, we denote $$\mathbb Z_{r+1}=\{0,1,\cdots,r\}$$ and
$$\mathbb Z_{r+1}^m=\{(\lambda_1,\cdots,\lambda_m);
\lambda_1,\cdots,\lambda_m\in\mathbb Z_{r+1}\}.$$

Then $\mathbb Z_{r}\subseteq\mathbb Z_{r+1}$ and $\mathbb
Z_{r}^m\subseteq\mathbb Z_{r+1}^m$. To describe the code
construction method, we need the following two notations:

For each $\alpha=(\lambda_1,\cdots,\lambda_m)\in\mathbb
Z_{r+1}^m\backslash\mathbb Z_r^m$, denote
\begin{align}\label{T-alf}
T(\alpha)=\{j\in[m]; \lambda_j\in\mathbb Z_r\}\end{align} and
\begin{align}\label{L-alf}\mathcal L(\alpha)
=\{(\mu_1,\cdots,\mu_m)\in\mathbb Z_{r}^m; \mu_j=\lambda_j,
\forall j\in T(\alpha)\}.\end{align}

For example, let $r=2$, $m=6$. For $\alpha=(0,1,2,0,2,2)\in\mathbb
Z_{3}^6$, we have $T(\alpha)=\{1,2,4\}$ and $$\mathcal
L(\alpha)=\{(0,1,\lambda_3,0,\lambda_5,\lambda_6); \lambda_3,
\lambda_5,\lambda_6\in\mathbb Z_2\}.$$

Clearly, for each $\alpha=(\lambda_1,\cdots,\lambda_m)\in\mathbb
Z_{r+1}^m\backslash\mathbb Z_r^m$, $T(\alpha)$ is a proper subset
of $[m]$ and $\mathcal L(\alpha)$ is a non-empty subset of
$\mathbb Z_r^m$. Moreover, if $\alpha=(r,\cdots,r)$, then
$T(\alpha)=T(r,\cdots,r)=\emptyset$ and $\mathcal
L(\alpha)=\mathcal L(r,\cdots,r)=\mathbb Z_{r}^m.$

\vspace{0.1cm}Let $n=(r+1)^m$ and $k=r^m$. Let
$H=(h_{\alpha,\beta})$ be an $(n-k)\times n$ binary matrix whose
rows are indexed by $\mathbb Z_{r+1}^m\backslash\mathbb Z_r^m$ and
columns are indexed by $\mathbb Z_{r+1}^m$ such that
\begin{equation}\label{def-H}
h_{\alpha,\beta}=\left\{\begin{aligned}
&1, ~ ~\text{if}~\beta\in\mathcal L(\alpha)\cup\{\alpha\};\\
&0, ~ ~\text{Otherwise}.\\
\end{aligned} \right.
\end{equation}
Clearly, the submatrix $H_1$ formed by the columns of $H$ that are
indexed by $\mathbb Z_{r+1}^m\backslash\mathbb Z_r^m$ is a
permutation matrix of order $n-k$. So $\text{rank}(H)=n-k$.

Let $\mathcal C$ be the binary code with a parity check matrix
$H$. Then $\mathcal C$ is an $[n,k]$ linear code. Clearly,
$\mathcal C$ is just the $[r+1,r]$ single-parity-check code for
$m=1$ and the square code constructed in \cite{Wang14} for $m=2$.
In general, it is not difficult to prove that the code $\mathcal
C$ is equivalent to the direct product of $m$ copies of the
$[r+1,r]$ single-parity-check code. Moreover, we have the
following theorem.

\vspace{0.1cm}\begin{thm}\label{main-th} The code $\mathcal C$
which has a parity check matrix $H$ is an $(n,k,r,t)$-ELRC, where
$t=2^m-1$.
\end{thm}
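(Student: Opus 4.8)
The plan is to verify the criterion of Lemma~\ref{lem-ELRC}: given any nonempty $E\subseteq\mathbb Z_{r+1}^m$ with $|E|\le 2^m-1$, we must exhibit a coordinate $i\in E$ possessing a repair set contained in $\mathbb Z_{r+1}^m\setminus E$. The structure of $H$ tells us exactly what the repair sets look like. For a parity coordinate $\alpha\in\mathbb Z_{r+1}^m\setminus\mathbb Z_r^m$, row $\alpha$ of $H$ gives the relation $x_\alpha=\sum_{\beta\in\mathcal L(\alpha)}x_\beta$, so $\mathcal L(\alpha)$ is a repair set of $\alpha$ of size $r^{|T(\alpha)|}\le r^{m-1}\le r$ only when $|T(\alpha)|\le 1$. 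More usefully, whenever a single parity-check equation involves the coordinate $i$ and has weight at most $r+1$, it yields a repair set of $i$ of size at most $r$; and because $\mathcal C$ is equivalent to the direct product of $m$ single-parity-check codes, the "lines" of the product structure (fix all but one of the $m$ coordinates of the index in $\mathbb Z_{r+1}^m$, let the remaining coordinate range over $\mathbb Z_{r+1}$) are exactly the minimum-weight parity checks, each of weight $r+1$, and each coordinate lies on exactly $m$ such lines, one per direction.

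So I would recast the problem combinatorially on the grid $\mathbb Z_{r+1}^m$: a line is a set of $r+1$ points agreeing in all but one coordinate, and I want to show that any $E$ with $1\le|E|\le 2^m-1$ contains a point $i$ lying on some line $L$ with $L\cap E=\{i\}$ (equivalently $L\setminus\{i\}\subseteq\mathbb Z_{r+1}^m\setminus E$). Suppose for contradiction no such point exists: then every point of $E$, along each of its $m$ coordinate directions, meets at least one other point of $E$ on that line. The natural approach is induction on $m$. Slice $\mathbb Z_{r+1}^m$ into $r+1$ parallel hyperplanes $H_0,\dots,H_r$ along the last coordinate, each a copy of $\mathbb Z_{r+1}^{m-1}$, and let $E_s=E\cap H_s$. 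A point in $E_s$ that is "stuck" in direction $m$ must have a partner in some other slice at the same $\mathbb Z_{r+1}^{m-1}$-position; and a point stuck in one of the first $m-1$ directions is stuck within its own slice. The key claim to push through is: if $E$ is a counterexample in dimension $m$, then at least two of the nonempty slices $E_s$ are counterexamples in dimension $m-1$ (a slice $E_s$ fails to be a counterexample only if it has a point free in some direction $\le m-1$, and I'd argue that point is then also free in direction $m$ unless its column is occupied in another slice — a careful bookkeeping step). Combined with the induction hypothesis $|E_s|\ge 2^{m-1}$ for those slices, this forces $|E|\ge 2\cdot 2^{m-1}=2^m$, the desired contradiction; the base case $m=1$ is the trivial statement that on a single line of $r+1$ points, any nonempty set of size $\le 1$ has an isolated point.

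The main obstacle I anticipate is the bookkeeping in that inductive step: "free in an internal direction" does not immediately give "free in direction $m$," because the witnessing line lies entirely inside one slice, so one must separately control what happens in the other slices at that point's column, and decide which slices inherit the counterexample property. I expect the clean way is to define, for each nonempty slice, the notion of being "bad" (a $(m-1)$-dimensional counterexample) versus "good," show a good slice supplies an isolated point of $E$ unless its witnessing isolated-in-its-slice point is rescued by occupancy in another slice, and then count columns to conclude that at most one slice can be nonempty-and-good; everything else is nonempty-and-bad, of which there are at least... here one needs $E$ itself to be a counterexample to guarantee at least two such. Handling the edge cases where some $E_s$ are empty, and where $|T(\alpha)|$ considerations interact with repair-set size $\le r$, will require care but no new ideas. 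Once the grid lemma is established, Theorem~\ref{main-th} follows immediately by translating "isolated point on a line" back into "coordinate with a repair set avoiding $E$" and invoking Lemma~\ref{lem-ELRC}.
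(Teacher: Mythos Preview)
Your approach---induction on $m$, slicing along one coordinate, and using the axis-parallel lines of $\mathbb Z_{r+1}^m$ as repair sets---is exactly the paper's. But the ``main obstacle'' you anticipate is a phantom. You worry that a point $p\in E_s$ which is free in some internal direction $i\le m-1$ within its slice might not be free in $E$, and you try to redirect to direction $m$ and reason about the column of $p$ in other slices. This is unnecessary: the line through $p$ in direction $i$ lies \emph{entirely} in slice $s$ (only direction $m$ moves between slices), so if that line meets $E_s$ only at $p$, it already meets $E$ only at $p$. Hence if any nonempty slice fails to be a counterexample in dimension $m-1$, then $E$ itself is not a counterexample in dimension $m$.

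With that observation the inductive step is immediate and coincides with the paper's two cases, just stated contrapositively: if $E$ is a counterexample then (i) every nonempty slice is a counterexample in dimension $m-1$, hence of size $\ge 2^{m-1}$ by induction, and (ii) at least two slices are nonempty, since if only one slice were nonempty every point of $E$ would be free in direction $m$. Together these give $|E|\ge 2\cdot 2^{m-1}=2^m$. The paper argues the same dichotomy directly: either some slice has size in $(0,2^{m-1}-1]$ and induction inside that slice produces the required point, or all nonempty slices have size $\ge 2^{m-1}$, forcing a single nonempty slice and freeing direction $m$. Your proposed machinery of ``good'' versus ``bad'' slices, column counting, and $|T(\alpha)|$ edge cases is all unneeded.
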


\vspace{0.1cm} It is easy to see that the code rate of
$(n,k,r,t)$-ELRC obtained by the above construction is much larger
than the bound \eqref{rate-bd-1}. Comparison of code length of
$(n,k,r,t)$-ELRC with $(r,t+1)_a$ codes and $(r,t)$-CLRC for $r=2$
and $m\in\{2,3,4,5\}$ is given in Table 1, from which we can see
that the code rate of $(n,k,r,t)$-ELRC is much larger than
$(r,t+1)_a$ codes and $(r,t)$-CLRC for the same $r$ and $t$.

Moreover, it was pointed out in \cite{Tamo14} that the direct
product of $m$ copies of the $[r+1,r]$ single-parity-check code
has locality $r$ and availability $m$, which implies that
$\mathcal C$ permits local repair for up to $m$ erasures by
parallel approach. Note that Theorem \ref{main-th} shows that
$\mathcal C$ permits locally repair for up to $2^m-1$ erasures by
the sequential approach, which is much larger than $m$ for $m\geq
2$. Hence, our result shows that sequential approach has much
bigger advantage than parallel approach for LRC. Table 2 is the
comparison of erasure tolerance of the constructed code for
sequential approach and parallel approach, where we assume $r=2$.

\vspace{0.2cm}\begin{center}
\begin{tabular}{|p{0.45cm}|p{0.45cm}|p{0.45cm}|p{1.64cm}|p{1.64cm}|p{1.64cm}|}
\hline \small{$m$} & \small{$t$} & \small{$k$}  & \small{Code
length of $(r,t)$-ELRC} & \small{Code length of $(r,t+1)_a$ codes}
& \small{Code length of $(r,t)$-CLRC}\\
\hline \small{$2$} & \small{$3$}  & \small{$4$}  & \small{$9$}   & \small{$\geq 10$}  & \small{$\geq 10$} \\
\hline \small{$3$} & \small{$7$}  & \small{$8$}  & \small{$27$}  & \small{$\geq 36$}  & \small{$\geq 36$} \\
\hline \small{$4$} & \small{$15$} & \small{$16$} & \small{$81$}  & \small{$\geq 136$} & \small{$\geq 136$} \\
\hline \small{$5$} & \small{$31$} & \small{$32$} & \small{$243$} & \small{$\geq 528$} & \small{$\geq 528$} \\
\hline
\end{tabular}\\
\vspace{0.15cm}\footnotesize{Table 1. Comparison of code length of
three subclasses of LRCs for $r=2$.}
\end{center}

\vspace{0.2cm}\begin{center}
\begin{tabular}{|p{0.5cm}|p{0.5cm}|p{0.5cm}|p{2.2cm}|p{2.2cm}|}
\hline \small{$m$} & \small{$k$}  & \small{$n$}   & \small{Erasure
tolerance by
sequential repair approach} & \small{Erasure tolerance by parallel repair approach}\\
\hline \small{$2$} & \small{$4$}  & \small{$9$}   & \small{$3$}  & \small{$2$} \\
\hline \small{$3$} & \small{$8$}  & \small{$27$}  & \small{$7$}  & \small{$3$} \\
\hline \small{$4$} & \small{$16$} & \small{$81$}  & \small{$15$} & \small{$4$} \\
\hline \small{$5$} & \small{$32$} & \small{$243$} & \small{$31$} & \small{$5$} \\
\hline
\end{tabular}\\
\vspace{0.15cm}\footnotesize{Table 2. Comparison of erasure
tolerance of the constructed code with $r=2$: sequential approach
and parallel approach.}
\end{center}

\vspace{0.1cm}The proof of Theorem \ref{main-th} will be given in
the next section. We now give an example of the above
construction.

\renewcommand\figurename{Fig}
\begin{figure}[htbp]
\begin{center}
\includegraphics[height=7.0cm]{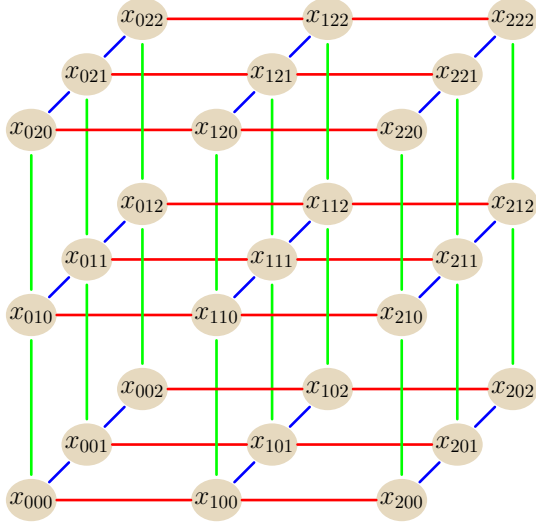}
\end{center}
\caption{Repair relation of code symbols of the code in Example
\ref{eg-code}. We use $\mathbb Z_3^3$ to index the coordinates
and, to simplify notation, use $x_{\lambda_1,\lambda_2,\lambda_3}$
to denote the code symbol $x_{(\lambda_1,\lambda_2,\lambda_3)}$
for each $(\lambda_1,\lambda_2,\lambda_3)\in\mathbb Z_3^3$.}
\label{exam-code-2}
\end{figure}

\begin{exam}\label{eg-code}
Let $r=2$ and $m=3$. Then $k=8$ and $n=27$. We can construct a
matrix $H$ and a binary $[27,8]$ linear code $\mathcal C$ by the
above method. Similar to Fig. \ref{exam-code-1-1}, we can
illustrate the repair set of each code symbol of $\mathcal C$ by
Fig. \ref{exam-code-2}. More details can be seen in Lemma
\ref{rp-set}. We will show that $\mathcal C$ is an
$(n,k,r,t=7)$-ELRC. That is, any $t'\leq 7$ code symbols of
$\mathcal C$ can be locally repaired by other code symbols of
$\mathcal C$. For example, suppose $E=\{(020),(120),(010),(110)$,
$(021),(121),(011)\}$. Then the code symbols in $E$ can be locally
repaired by the following sequence of equation:
$x_{011}=x_{111}+x_{211}$, $x_{121}=x_{111}+x_{101}$,
$x_{021}=x_{121}+x_{221}$, $x_{020}=x_{021}+x_{022}$,
$x_{120}=x_{121}+x_{122}$, $x_{010}=x_{011}+x_{012}$ and
$x_{110}=x_{111}+x_{112}$. In general, this claim can be checked
as follows.
\end{exam}

We partition the index set $\mathbb Z_{r+1}^3=\mathbb Z_{3}^3$
into three subsets
$$I_j=\{(\lambda_1,\lambda_2,\lambda_3);
\lambda_1,\lambda_2\in\mathbb Z_{r+1} \text{~and~}\lambda_3=j\},
j=0,1,2.$$ For example, $I_0=\{(000), (010), (020), (100), (110),
(120)$, $(200), (210), (220)\}.$ For each $j\in\{0,1,2\}$, from
Fig. \ref{exam-code-2}, the repair relation of code symbols in
$I_j$ is the same as code symbols in Fig. \ref{exam-code-1-1}. So
any $t'\leq 3$ code symbols in $I_j$ can be locally repaired by
other code symbols in $I_j$. Now, suppose $E\subseteq\mathbb
Z_{3}^3$ of size $|E|\leq 7$. Then there exist at most one
$j\in\{0,1,2\}$ such that $|E\cap I_j|> 3$. For those $j$ such
that $|E\cap I_j|\leq 3$, code symbols in $E\cap I_j$ can be
locally repaired by other code symbols in $E\cap I_j$. Finally, if
there exist a $j_0$ such that $|E\cap I_{j_0}|> 3$. Then each code
symbol in $E\cap I_{j_0}$ can be locally repaired by code symbols
in $I_{j_1}\cup I_{j_2}$, where
$\{j_1,j_2\}=\{1,2,3\}\backslash\{j_0\}$. Hence, all code symbols
in $E$ can be locally repaired by sequential approach.

\section{Proof of Theorem \ref{main-th}}
In this section, we prove Theorem \ref{main-th}. The basic idea of
the proof is the same as in Example \ref{eg-code}.

Before proving Theorem \ref{main-th}, we first need to prove a
lemma. For each $\alpha=(\lambda_1,\cdots,\lambda_m)\in\mathbb
Z_{r+1}^m$ and each $i\in[m]$, denote
\begin{align}\label{mL_alf}
L^{(i)}_\alpha=\{(\mu_1,\cdots,\mu_m)\in\mathbb Z_{r+1}^m;
\mu_j=\lambda_j, \forall j\in[m]\backslash\{i\}\}.\end{align} Then
we have the following lemma.
\begin{lem}\label{rp-set}
For each $\alpha=(\lambda_1,\cdots,\lambda_m)\in\mathbb Z_{r+1}^m$
and $i\in[m]$, the subset $L^{(i)}_\alpha\backslash\{\alpha\}$ is
a repair set of $\alpha$.
\end{lem}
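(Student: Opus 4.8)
We need to show that for each $\alpha=(\lambda_1,\dots,\lambda_m)\in\mathbb Z_{r+1}^m$ and each coordinate direction $i\in[m]$, the "line through $\alpha$ in direction $i$" minus the point $\alpha$ itself — that is $L^{(i)}_\alpha\setminus\{\alpha\}$ — is an $(r,\mathcal C)$-repair set of $\alpha$. Since $|L^{(i)}_\alpha|=r+1$, the set $L^{(i)}_\alpha\setminus\{\alpha\}$ has exactly $r$ elements, so the size condition $|R|\le r$ in Definition \ref{def-lrs} is automatic. The real content is to exhibit, for every codeword $x\in\mathcal C$, a fixed linear relation $x_\alpha=\sum_{\beta\in L^{(i)}_\alpha\setminus\{\alpha\}} a_\beta x_\beta$ with coefficients independent of $x$. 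Equivalently, we must show that the indicator vector of $L^{(i)}_\alpha$ (all coefficients $1$, over $\mathbb F_2$) lies in the row space of the parity-check matrix $H$; then $\sum_{\beta\in L^{(i)}_\alpha} x_\beta = 0$ for all $x\in\mathcal C$, which rearranges to the desired repair equation over $\mathbb F_2$.

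What I would prove. The plan is: for a given $\alpha$ and $i$, produce an explicit subset $S$ of the row index set $\mathbb Z_{r+1}^m\setminus\mathbb Z_r^m$ such that $\sum_{\gamma\in S}(\text{row }\gamma \text{ of }H)$ equals the indicator vector $\mathbf 1_{L^{(i)}_\alpha}$. The natural candidate is the set of all $\gamma$ that agree with $\alpha$ in every coordinate except possibly the $i$th, and that are legal row indices (i.e.\ $\gamma\notin\mathbb Z_r^m$, meaning some coordinate of $\gamma$ equals $r$). There are two cases. \textbf{Case 1: $\lambda_i\ne r$.} Then I would take $S$ to consist of the single row $\gamma$ obtained from $\alpha$ by setting the $i$th coordinate to $r$ (all other coordinates equal $\lambda_j$). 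One checks directly from \eqref{T-alf}--\eqref{L-alf} that for this $\gamma$ we have $i\notin T(\gamma)$, $j\in T(\gamma)$ for all $j\ne i$ (since $\lambda_j$ could be in $\mathbb Z_r$ or equal $r$ — careful, we need $T(\gamma)=\{j\ne i:\lambda_j\in\mathbb Z_r\}$), and $\mathcal L(\gamma)\cup\{\gamma\}$ is exactly $L^{(i)}_\alpha$. Here a small subtlety arises: if some $\lambda_j=r$ for $j\ne i$, then that $j\notin T(\gamma)$, so $\mathcal L(\gamma)$ would let the $j$th coordinate vary too, and a single row no longer suffices — so Case 1 really splits further, or rather I should define $S$ more carefully. \textbf{Case 2: $\lambda_i=r$.} Then $\alpha$ is itself a row index and one must combine several rows.

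Cleaner approach. Rather than case-chase, I would argue as follows. Fix $\alpha$ and $i$; let $P=\{j\in[m]\setminus\{i\}:\lambda_j=r\}$ be the set of "already-$r$" coordinates other than $i$, and let $q=|P|$. For each subset $Q\subseteq P\cup\{i\}$ with $Q\ne\emptyset$, let $\gamma_Q$ be the point agreeing with $\alpha$ outside $P\cup\{i\}$, equal to $r$ on $Q$, and — we need $\gamma_Q$ to be a valid row index, which it is iff $Q\ne\emptyset$ — hmm, but we also need $\mathcal L(\gamma_Q)$ to track the right coordinates. I expect the correct identity to be an inclusion–exclusion over the coordinates in $P\cup\{i\}$: summing the rows $\gamma_Q$ over all nonempty $Q\subseteq P\cup\{i\}$, the "diagonal" entries (the $\{\gamma_Q\}$ part of \eqref{def-H}) and the "local code" entries (the $\mathcal L(\gamma_Q)$ part) interleave so that over $\mathbb F_2$ everything cancels except the coordinates of $L^{(i)}_\alpha$. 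The main obstacle, and where I'd spend the most care, is verifying this cancellation: one must show each $\beta\in L^{(i)}_\alpha$ is covered an odd number of times and each $\beta\notin L^{(i)}_\alpha$ an even number of times by the sets $\mathcal L(\gamma_Q)\cup\{\gamma_Q\}$, which is a parity count of subsets of $P\cup\{i\}$ of the form $\binom{q+1}{\cdot}$ with an appropriate offset — a clean but slightly fiddly counting argument. Once this row-space membership is established, the repair equation $x_\alpha=\sum_{\beta\in L^{(i)}_\alpha\setminus\{\alpha\}}x_\beta$ follows immediately over $\mathbb F_2$, with coefficients ($1$) independent of $x$, and since $|L^{(i)}_\alpha\setminus\{\alpha\}|=r$, Definition \ref{def-lrs} is satisfied, completing the proof.
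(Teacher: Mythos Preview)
Your overall strategy --- show that the indicator vector $\mathbf 1_{L^{(i)}_\alpha}$ lies in the row space of $H$, so that $\sum_{\beta\in L^{(i)}_\alpha}x_\beta=0$ for every codeword --- is exactly right and is what the paper does too. The gap is in your choice of the row set $S$. Your ``cleaner approach'' via inclusion--exclusion over nonempty $Q\subseteq P\cup\{i\}$ is not well-defined: for $j\in P$ you already have $\lambda_j=r$, so whether or not $j$ lies in $Q$ does not change $\gamma_Q$ at all, and you never specify the $i$th coordinate of $\gamma_Q$ when $i\notin Q$. The family $\{\gamma_Q\}$ therefore collapses to at most two distinct points, so there is nothing to do inclusion--exclusion over, and the promised parity count never materializes.

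The correct (and far simpler) choice, which is essentially the paper's argument rephrased as a row sum, is
\[
S \;=\; L^{(i)}_\alpha \cap \bigl(\mathbb Z_{r+1}^m\setminus\mathbb Z_r^m\bigr),
\]
i.e.\ the points of the line itself that happen to be row indices. Write $\alpha_c$ for the point of $L^{(i)}_\alpha$ with $i$th coordinate $c\in\mathbb Z_{r+1}$. If every $\lambda_j\in\mathbb Z_r$ for $j\ne i$, then $S=\{\alpha_r\}$ and directly from \eqref{L-alf} one has $\mathcal L(\alpha_r)\cup\{\alpha_r\}=L^{(i)}_\alpha$, so a single row already does the job. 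Otherwise every $\alpha_c$ is a row index, and the only fact you need is the disjoint decomposition
\[
\mathcal L(\alpha_r)\;=\;\bigcup_{c=0}^{r-1}\mathcal L(\alpha_c),
\]
obtained simply by sorting $\mathcal L(\alpha_r)$ according to the value of the $i$th coordinate (this is the paper's \eqref{eq4-rp-set}). Summing the $r+1$ rows indexed by $\alpha_0,\dots,\alpha_r$, the $\mathcal L$-parts cancel in pairs over $\mathbb F_2$ and what survives is precisely $\{\alpha_0,\dots,\alpha_r\}=L^{(i)}_\alpha$. No subset enumeration or parity counting is needed --- just this one decomposition.
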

\begin{proof}
To simplify notation, we assume $i=1$. Then by assumption of this
lemma, we have
\begin{align*}
L^{(1)}_\alpha=\{(\lambda_1',\lambda_2,\cdots,\lambda_m)\in\mathbb
Z_{r+1}^m; \lambda_1'\in\mathbb Z_{r+1}\}.\end{align*} For each
$\lambda_1'\in\mathbb Z_{r+1}$, denote
$\alpha_{\lambda_1'}=(\lambda_1',\lambda_2,\cdots,\lambda_m)$.
Then $L^{(1)}_\alpha=\{\alpha_{0},\alpha_{1}, \cdots,\alpha_{r}\}$
and $\alpha=\alpha_{\lambda_1}\in L^{(1)}_\alpha$. Hence,
\begin{align}\label{eq1-rp-set}
|L^{(1)}_\alpha|=r+1.\end{align} For each fixed
$\lambda_1'\in\mathbb Z_{r}=\{0,1,\cdots,r-1\}$, by \eqref{T-alf},
we have
\begin{align*}
T(\alpha_{\lambda_1'})=T(\alpha_r)\cup\{1\}.\end{align*} So by
\eqref{L-alf}, we have
\begin{align}\label{eq2-rp-set}
\nonumber&~\mathcal
L(\alpha_{\lambda_1'})\\&=\{(\mu_1,\cdots,\mu_m)\in\mathbb
Z_{r}^m; \mu_1=\lambda_1'\text{~and~}\mu_j=\lambda_j, \forall j\in
T(\alpha_r)\}.\end{align} Moreover, by \eqref{L-alf}, we have
\begin{align}\label{eq3-rp-set}
\mathcal L(\alpha_r)=\{(\mu_1,\cdots,\mu_m)\in\mathbb Z_{r}^m;
\mu_j=\lambda_j, \forall j\in T(\alpha_r)\}.\end{align} Combining
\eqref{eq2-rp-set} and \eqref{eq3-rp-set}, we have
\begin{align}\label{eq4-rp-set}
\mathcal L(\alpha_r)=\bigcup_{\lambda_1'=0}^{r-1}\mathcal
L(\alpha_{\lambda_1'}).\end{align} By construction of $H$ and
$\mathcal C$, for all codeword $(x_1,\cdots,x_n)$ of $\mathcal C$,
we have
\begin{align}\label{eq5-rp-set}
x_{\alpha_{r}}=\sum_{\beta\in\mathcal
L(\alpha_{r})}x_{\beta}\end{align} and for each
$\lambda_1'\in\mathbb Z_{r}=\{0,1,\cdots,r-1\}$, we have
\begin{align}\label{eq6-rp-set}
x_{\alpha_{\lambda_1'}}=\sum_{\beta\in\mathcal
L(\alpha_{\lambda_1'})}x_{\beta}.\end{align} By
\eqref{eq2-rp-set}, $\mathcal L(\alpha_{0}), \mathcal
L(\alpha_{1}), \cdots, \mathcal L(\alpha_{r-1})$ are mutually
disjoint. So by combining \eqref{eq4-rp-set}, \eqref{eq5-rp-set}
and \eqref{eq6-rp-set}, we have
\begin{align}\label{eq7-rp-set} \nonumber
x_{\alpha_r}&=\sum_{\beta\in\mathcal L(\alpha_r)}x_{\beta}\\
\nonumber &=\sum_{\beta\in\bigcup_{\lambda_1'=0}^{r-1}\mathcal
L(\alpha_{\lambda_1'})}x_{\beta}\\
\nonumber & =\sum_{\lambda_1'=0}^{r-1}\left(\sum_{\beta\in\mathcal
L(\alpha_{\lambda_1'})}x_{\beta}\right)\\
&
=\sum_{\lambda_1'=0}^{r-1}x_{\alpha_{\lambda_1'}}
\end{align}
Note that $L^{(1)}_\alpha=\{\alpha_{0},\alpha_{1},
\cdots,\alpha_{r}\}$ and $\alpha=\alpha_{\lambda_1}\in
L^{(1)}_\alpha$. Then by \eqref{eq7-rp-set}, we have
$$x_{\alpha}=\sum_{\alpha'\in
L^{(1)}_\alpha\backslash\{\alpha'\}}x_{\alpha'}.$$ Hence,
$L^{(1)}_\alpha\backslash\{\alpha\}$ is a repair set of $\alpha$.

For any $i\in[m]$, by the same discussion, we can prove that
$L^{(i)}_\alpha\backslash\{\alpha\}$ is a repair set of $\alpha$.
\end{proof}

We give an example as below to show the arguments in the proof of
Lemma \ref{rp-set}.
\begin{exam}\label{eg-1}
Let $r=2$, $m=6$, $\alpha=(0,1,2,0,2,2)$ and $i=4$. Then we have
$$L^{(i)}(\alpha)=\{\alpha_0, \alpha_1, \alpha_2\},$$ where
$\alpha_0=(0,1,2,0,2,2), \alpha_1=(0,1,2,1,2,2)$ and
$\alpha_2=(0,1,2,2,2,2)\}.$ By \eqref{L-alf}, we have
$$T(\alpha_0)=T(\alpha_1)=\{1,2,4\}\text{~and~}T(\alpha_2)=\{1,2\}$$
Moreover, by \eqref{L-alf}, we have $$\mathcal
L(\alpha_0)=\{(0,1,\lambda_3,0,\lambda_5,\lambda_6); \lambda_3,
\lambda_5, \lambda_6\in\mathbb Z_2\},$$ $$\mathcal
L(\alpha_1)=\{(0,1,\lambda_3,1,\lambda_5,\lambda_6); \lambda_3,
\lambda_5, \lambda_6\in\mathbb Z_2\}$$ and $$\mathcal
L(\alpha_2)=\{(0,1,\lambda_3,\lambda_4,\lambda_5,\lambda_6);
\lambda_3, \lambda_4, \lambda_5, \lambda_6\in\mathbb Z_2\}.$$ So
$\mathcal L(\alpha_0)\cap\mathcal L(\alpha_1)=\emptyset$ and
$\mathcal L(\alpha_0)\cup\mathcal L(\alpha_1)=\mathcal
L(\alpha_2)$.

Let $H$ be constructed by \eqref{def-H} and $\mathcal C$ be the
code with parity check matrix $H$. Then for all
$(x_1,\cdots,x_n)\in\mathcal C$, we have
\begin{align*}x_{\alpha_2}&=\sum_{\beta\in\mathcal L(\alpha_2)}x_{\beta}\\
&=\sum_{\beta\in\mathcal L(\alpha_0)\cup\mathcal
L(\alpha_1)}x_{\beta}\\&=\sum_{\beta\in\mathcal
L(\alpha_0)}x_{\beta}+\sum_{\beta\in\mathcal
L(\alpha_1)}x_{\beta}\\&=x_{\alpha_0}+x_{\alpha_1}.\end{align*}
\end{exam}
So $\{\alpha_0,\alpha_1\}$ is a repair set of $\alpha_2$.
Similarly, $\{\alpha_1,\alpha_2\}$ is a repair set of $\alpha_0$,
and $\{\alpha_0,\alpha_2\}$ is a repair set of $\alpha_1$.

Now, we can prove Theorem \ref{main-th}.
\begin{proof}[Proof of Theorem \ref{main-th}]
Note that the $n$ coordinates of codewords of $\mathcal C$ can be
indexed by $\mathbb Z_{r+1}^{m}$. By Lemma \ref{lem-ELRC}, we need
to prove that for any $E\subseteq \mathbb Z_{r+1}^{m}$ of size
$0<|E|\leq2^m-1$, there exists an $\alpha\in E$ such that $\alpha$
has a repair set $R\subseteq\mathbb Z_{r+1}^{m}\backslash E$.
Further, by Lemma \ref{rp-set}, it is sufficient to prove that
there exists an $i\in[m]$ and an $\alpha\in E$ such that
$L^{(i)}_\alpha\backslash\{\alpha\}\subseteq\mathbb
Z_{r+1}^{m}\backslash E$. We can prove this claim by induction on
$m$.

Clearly, the claim is true for $m=1$. To prove the claim for
$m\geq 2$, by induction, we can assume that the claim is true for
$m-1$. That is, for any subset $E'\subseteq \mathbb Z_{r+1}^{m-1}$
of size $0<|E'|\leq2^{m-1}-1$, there exist an $i\in[m-1]$ and an
$\alpha'=(\lambda_1, \cdots, \lambda_i, \cdots, \lambda_{m-1})\in
E'$ such that
$L^{(i)}_{\alpha'}\backslash\{\alpha'\}\subseteq\mathbb
Z_{r+1}^{m-1}\backslash E'$, i.e., $(\lambda_1, \cdots,
\lambda'_{i}, \cdots, \lambda_{m-1})\notin E'$ for all
$\lambda'_{i}\in\mathbb Z_{r+1}\backslash\{\lambda_{i}\}$. Then we
can prove the claim for $m$ as follows.

For each fixed $\lambda\in\mathbb Z_{r+1}$, denote
$$E_{\lambda}=\{(\mu_1,\cdots,
\mu_{m-1}, \mu_m)\in E; \mu_m=\lambda\}.$$ Clearly, the subsets
$E_0,E_1,\cdots,E_r$ are mutually disjoint and
$\bigcup_{j=0}^{r}E_j=E$. We have the following two cases:

Case 1: $0<|E_{\lambda}|\leq2^{m-1}-1$ for some $\lambda\in\mathbb
Z_{r+1}$. Let $$E'=(\mu_1,\cdots, \mu_{m-1})\in\mathbb
Z_{r+1}^{m-1}; (\mu_1,\cdots, \mu_{m-1}, \lambda)\in
E_\lambda\}.$$ Then $0<|E'|=|E_{\lambda}|\leq2^{m-1}-1$. By
induction assumption, there exist an $i\in[m-1]$ and an
$\alpha'=(\lambda_1,\cdots,\lambda_{i},\cdots,\lambda_{m-1})\in
E'$ such that
$(\lambda_1,\cdots,\lambda_{i}',\cdots,\lambda_{m-1})\notin E'$
for all $\lambda_{i}'\in\mathbb Z_{r+1}\backslash\{\lambda_{i}\}$.
So $(\lambda_1,\cdots,\lambda_{i}',\cdots,\lambda_{m-1},
\lambda)\notin E_\lambda$. Note that $E_0,E_1,\cdots,E_r$ are
mutually disjoint and $\bigcup_{j=0}^{r}E_j=E$. Then
$(\lambda_1,\cdots,\lambda_{i}',\cdots,\lambda_{m-1},
\lambda)\notin E$ for all $\lambda_{i}'\in\mathbb
Z_{r+1}\backslash\{\lambda_{i}\}$. Let
$\alpha=(\lambda_1,\cdots,\lambda_i,\cdots\lambda_{m-1},
\lambda)$. Then $\alpha\in E$ and we have
$L^{(i)}_\alpha\backslash\{\alpha\}\subseteq\mathbb
Z_{r+1}^{m}\backslash E$.

Case 2: $|E_\lambda|\geq2^{m-1}$ or $|E_\lambda|=0$ for all
$\lambda\in\mathbb Z_{r+1}$. Since $0<|E|\leq2^m-1$, there exist a
$\lambda_m\in\mathbb Z_{r+1}$ such that $|E_{\lambda_m}|\geq
2^{m-1}$ and $|E_{\lambda}|=0$ for all $\lambda\in\mathbb
Z_{r+1}\backslash\{\lambda_m\}$. Hence, $E\subseteq E_{\lambda_m}$
and $E=E_{\lambda_m}$. Now, let $i=m$ and pick an
$\alpha=(\lambda_1,\cdots, \lambda_{m-1}, \lambda_m)\in
E_{\lambda_m}$. Then $(\lambda_1,\cdots, \lambda_{m-1},
\lambda_m')\notin E_{\lambda_m}=E$ for all $\lambda_m'\in\mathbb
Z_{r+1}\backslash\{\lambda_m\}$. So we have
$L^{(m)}_\alpha\backslash\{\alpha\}\subseteq\mathbb
Z_{r+1}^{m}\backslash E$.

In both cases, there exists an $i\in[m]$ and an $\alpha\in E$ such
that $L^{(i)}_\alpha\backslash\{\alpha\}\subseteq\mathbb
Z_{r+1}^{m}\backslash E$.

Thus, by induction, we proved that for any $E\subseteq \mathbb
Z_{r+1}^{m}$ of size $0<|E|\leq2^m-1$, there exists an $i\in[m]$
and an $\alpha\in E$ such that
$L^{(i)}_\alpha\backslash\{\alpha\}\subseteq\mathbb
Z_{r+1}^{m}\backslash E$. By Lemma \ref{rp-set},
$R=L^{(i)}_\alpha\backslash\{\alpha\}$ is a repair set of
$\alpha$. Hence, by Lemma \ref{lem-ELRC}, $\mathcal C$ is an
$(n,k,r,t=2^m-1)$-ELRC, which completes the proof.
\end{proof}

\section{Conclusions}
The class of $(n,k,r,t)$-exact locally repairable codes (ELRC),
which permit local repair for up to $t$ erasures by the sequential
approach, is the most general setting of LRCs with exact repair.
Several subclasses of LRCs that are reported in the literature,
such as codes with locality $r$ and availability $t$, permit local
repair for up to $t$ erasures by parallel approach and are
contained in the class $(n,k,r,t)$-ELRC.

The direct product of $m$ copies of the $[r+1,r]$
single-parity-check code is a family of codes that has locality
$r$ and availability $m$. In this paper, we prove that such codes
are in fact an $(n,k,r,t)$-ELRC with $t=\sum_{i=1}^mi$. We believe
that such codes are optimal in term of code rate.

There still remains much work to be done for $(n,k,r,t)$-ELRC,
such as the code rate bound for $t\geq 4$ and the minimum distance
bound for $t\geq 3$. Also, constructing $(n,k,r,t)$-ELRC with
sufficiently large code rate (or minimum distance) is an
interesting problem.

\end{document}